\documentclass[11pt, sigconf, svgnames]{acmart}

\usepackage{booktabs} 
\usepackage{graphicx}
\usepackage{multirow}
\usepackage{color, xcolor}
\graphicspath{{./figures/}}
\usepackage{hyperref}
\usepackage{booktabs}
\usepackage{dsfont}

%

\setcopyright{none}


\newcommand{\Ind}{\mathds{1}}
\newcommand{\R}{\mathbb{R}}
\renewcommand{\P}{\mathbb{P}}


\acmDOI{}

\acmISBN{}

\acmConference[]{FATML Workshop}{August 2017}{Halifax, NS, Canada} 
\acmYear{2017}
\copyrightyear{}

\acmPrice{}


\settopmatter{printacmref=false} 
\renewcommand\footnotetextcopyrightpermission[1]{}

\begin{document}

  
\title{Fairer and more accurate, but for whom?}

\author{Alexandra Chouldechova}
\affiliation{%
  \institution{H. John Heinz III College \\ Carnegie Mellon University}
}
\email{achould@cmu.edu}

\author{Max G'Sell}
\affiliation{%
  \institution{Department of Statistics \\ Carnegie Mellon University}
}
\email{mgsell@cmu.edu}

\renewcommand{\shortauthors}{A. Chouldechova and M. G'Sell}

\begin{abstract}
Complex statistical machine learning models are increasingly being used or considered for use in high-stakes decision-making pipelines in domains such as financial services, health care, criminal justice and human services.  These models are often investigated as possible improvements over more classical tools such as regression models or human judgement.  While the modeling approach may be new, the practice of using some form of risk assessment to inform decisions is not.  

When determining whether a new model should be adopted, it is therefore essential to be able to compare the proposed model to the existing approach across a range of task-relevant accuracy and fairness metrics.  Looking at overall performance metrics, however, may be misleading.  Even when two models have comparable overall performance, they may nevertheless disagree in their classifications on a considerable fraction of cases. 

 In this paper we introduce a model comparison framework for automatically identifying subgroups in which the differences between models are most pronounced.  Our primary focus is on identifying subgroups where the models differ in terms of fairness-related quantities such as racial or gender disparities.  We present experimental results from a recidivism prediction task and a hypothetical lending example.  
\end{abstract}

%
%


\keywords{fairness, transparency, subgroup analysis, risk assessment instruments, predictive risk modeling, black-box models}

\maketitle

\section{Introduction}

Actuarial and clinical assessments of risk have long been mainstays of decision making in domains such as criminal justice, health care and human services.  Within the criminal justice system, for instance, recidivism prediction instruments and judicial discretion commonly enter into decisions concerning bail, parole and sentencing.  In these high-stakes settings, decisions made based on erroneous predictions can have a direct adverse impact on individuals’ lives.  Institutions are therefore continually seeking to improve the accuracy of their risk predictions, and many are turning to proprietary commercial tools and more complex ``black-box'' prediction models in pursuit of accuracy gains.  

When determining whether to replace or augment an existing risk assessment method, it is important to compare the proposed model to the existing approach across a range of task-relevant accuracy and fairness metrics.  As we will demonstrate, a comparison that looks only at overall performance can present an incomplete and potentially misleading picture.  

\vspace{0.5em}

\noindent \textbf{A motivating example.}  In May 2016 an investigative journalism team at ProPublica released a report \cite{propublica2016} on a proprietary recidivism prediction instrument called COMPAS\cite{compasfaq}, developed by Northpointe Inc.  The data set \cite{propublica2016data} released as part of this report contains COMPAS decile scores, 2-year recidivism outcomes and a number of demographic and crime-related variables for defendants scored as part of pre-trial proceedings in Broward County, Florida.  In particular, the data set contains information on the number of prior offenses (hereon denoted \texttt{Priors}) for each defendant.  Since criminal history is itself a good predictor of future recidivism, it is reasonable to suppose that before \texttt{COMPAS} was introduced, judges could have based their risk assessments on \texttt{Priors} instead.  Our question is thus: \emph{Does} \texttt{COMPAS} \emph{produce more accurate (and/or equitable) predictions of recidivism than} \texttt{Priors} \emph{alone}?

The table below summarizes the classification performance of the two models on the Broward county data.\footnote{Following the ProPublica analysis, we restrict our attention to the $6150$ defendants in the data whose race was recorded as either African-American or Caucasian. }

\vspace{1em} 

\begin{center}
\begin{tabular}{lrrrrr}
    \toprule
    Model        & Accuracy & AUC  & PPV  & TNR  & TPR  \\
    \midrule                              
    \texttt{Priors}  & 0.64  & 0.67 & 0.62  & 0.71 & 0.56 \\
    \texttt{COMPAS}  & 0.66  & 0.70 & 0.65  & 0.75 & 0.55 \\

  \bottomrule
\end{tabular}
\end{center}

\vspace{0.25em}

\noindent The numeric scores were converted to classification rules using a cutoff of $2$ for \texttt{Priors} and $5$ for \texttt{COMPAS}.  These cutoffs were selected so that both models would classify approximately the same proportion of defendants as high-risk (42\% and 39\%, respectively).  While \texttt{COMPAS} is somewhat more accurate according to the various metrics, the difference in performance is overall not very large.  One might therefore be inclined to conclude that the choice of model does not make much difference, and that the results are similar perhaps because \texttt{COMPAS} likely puts a large weight on criminal history and thus reaches the same conclusion as \texttt{Priors}.  This conclusion is \emph{incorrect}.  As it turns out, the two classifiers disagree on 32\% of all cases.  Furthermore, as we will see in Section \ref{sec:compas_vs_priors}, they differ tremendously in terms of error rates and racial disparities for certain subgroups of defendants.  Model choice matters.

\vspace{0.5em} 

\noindent \textbf{Main contributions.}  We introduce a model comparison framework based on a recursive binary partitioning algorithm for automatically identifying subgroups in which the differences between two classification models are most pronounced.  The methods presented in this paper specifically focus on identifying subgroups where the models differ in terms of fairness-related quantities such as racial or gender disparities in error or acceptance rates.  Our methods can be applied to black-box models trained according to an unknown mechanism, do not require knowledge of what inputs the models use to make predictions, and do not require the models to use the same input variables.  

One noteworthy application of our method is in the model training phase, where one may wish to understand the effect that including a particular set of (potentially sensitive) variables has on the resulting classifications.  While there are certainly settings where using a sensitive attribute in decision-making is prohibited by law, this is far from always being the case.  Many domains permit the consideration of sensitive attributes when doing so improves the welfare of traditionally disadvantaged groups.  Indeed, depending on the problem setting, there may be good reason to expect predictive factors or mechanisms to differ across groups.  As \citet{hardt2014big} argues, ``statistical patterns that apply to the majority may be invalid within a minority group.''  In settings where using information on sensitive attributes may be permitted, it is important to understand the implications that this choice has for fairness.  Our framework provides a principled approach to investigating these kinds of issues.  We explore this matter further in the hypothetical lending example of Section \ref{sec:adult}.  


%

\vspace{-0.6em}

\subsection{Outline}

We begin with an overview of some related literature on model transparency and subgroup analysis.  In Section \ref{sec:method} we describe the general framework for our model comparison approach and provide some details on the implementation.  We conclude with experimental results where we investigate (i) how racial disparities differ across models in the ProPublica COMPAS data, and (ii) how gender disparities in acceptance rates change when additional sensitive attributes are added to a hypothetical model of creditworthiness.

\vspace{-0.6em}

\subsection{Related Work}

Within the algorithmic fairness literature, notable recent work has introduced new variable importance measures for quantifying the influence of variables on classification decisions (see, e.g., \citep{henelius2014peek, adler2016auditing, datta2015influence, datta2016algorithmic}).  A motivation common to much of this body work has been the problem of assessing whether sensitive attributes such as race or gender have direct or indirect influence on model outcomes.    We also note the recent work of \citet{zhang2016identifying},  which considers the single-model problem of identifying  subgroups in which the estimated event probabilities differ significantly from observed proportions.    This existing literature differs from our proposal in that we seek to quantify and characterize the difference in fairness across different models rather than to assess the direct or indirect influence of features in a single pre-trained model.  Our proposed method for characterizing differences in fairness across models has connections to recent work on subgroup analysis and recursive binary partitioning approaches for heterogeneous treatment effect estimation \citep{su2009subgroup, athey2015machine}.  
 
 \vspace{-0.6em}
 
\subsection{Fairness Metrics}

Throughout the paper we will make references to ``fairness metrics'' or ``disparities'', which often correspond to differences in a particular classification metric across groups.  For instance, \emph{statistical parity} or \emph{equal acceptance rates} with respect to a binary gender indicator would be satisfied if men and women were classified to the positive outcome at approximately equal rates.  \emph{False positive rate balance} with respect to race would be satisfied in the COMPAS example if non-reoffending Black defendants were misclassified as high-risk at the same rate as non-reoffending White defendants.  The work of \cite{hardt2016equality, kleinberg2016inherent, chouldechova2017fairlong, corbett2017algorithmic, berk2017fairness} describes numerous commonly used metrics, and provides a discussion of inherent trade-offs that exist between them.   \citet{romei2014multidisciplinary} provide a broader survey of multidisciplinary approaches to discrimination analysis that go beyond simple classification metrics.  

\section{Model Comparison framework}\label{sec:method}


We now describe our methodology for identifying subgroups in which a given
disparity differs across models.  The central components of this approach are as follows.  First, we define a quantity of interest, $\Delta$, that captures differences in model fairness, and we show how this quantity is a simple function of the parameters of an exponential family model.  We then apply a recursive binary partitioning algorithm that uses a score-type test for $\Delta$ to partition the covariate space into regions within which $\Delta$ is homogeneous.
%

\vspace{0.25em}

\noindent \textbf{Notation.}  We begin with some notation.
Let $A\in\{a_1,a_2\}$ indicate a sensitive binary attribute (e.g., race in the
COMPAS example), and let
$Y \in \{0,1\}$ indicate the true outcome (e.g., 2-year recidivism).  Let
$\hat{Y}_{m_1}, \hat{Y}_{m_2}\in\{0,1\}$ denote the classifications made by two
classifiers, $m_1$ and $m_2$.  Due to space limitations, we focus our
description on disparities in the False Positive Rate.  Extensions to other
fairness metrics involving expressions of the form $\hat Y \mid A, Y$ (e.g., FNR, acceptance rates) are entirely analogous, and are discussed in Section \ref{sec:extensions}.  

The false
positive rate (FPR) for classifier $m_j$, among individuals in group $A=a$ is
denoted
$\mathrm{FPR}_{m_j}^a = \P(\hat{Y}_{m_j}=1|Y=0, A=a)$.  Disparities in FPR across
values of $A$ for model $m_j$ are captured by $\mathrm{FPR}_{m_j}^{a_2} -
\mathrm{FPR}_{m_j}^{a_1}$, and hence differences in these disparities between
classifiers can be captured by the difference-in-differences of the FPR:
\begin{align}
  \Delta &= \left(\mathrm{FPR}_{m_2}^{a_2} - \mathrm{FPR}_{m_2}^{a_1}\right) -
  \left(\mathrm{FPR}_{m_1}^{a_2} - \mathrm{FPR}_{m_1}^{a_1}\right).
  \label{eq:delta}
\end{align}
We focus on the difference-in-differences instead of difference in \emph{absolute} differences because it is important to be able to capture cases where the disparity differs in sign between two models but not necessarily in magnitude.  

The goal of the proposed method is to partition the covariate space into
subgroups such that $\Delta$ is homogeneous within each subgroup and different
between subgroups.  We say that $\Delta$ is homogeneous within a group if
that group cannot be partitioned into subgroups with significantly different
$\Delta$ values.  In Section \ref{sec:partition}, we will describe the
application of test-based recursive partitioning to obtain subgroups
homogeneous in $\Delta$.  In Section \ref{sec:model}, we describe the
likelihood model which underlies the tests of homogeneity.


\begin{table*}[ht]
  \begin{minipage}{0.49\textwidth}
    \centering
  \begin{tabular}{r|cc}
    \toprule
    $A = a_1$ & $\hat{Y}_{m_2}=0$ & $\hat{Y}_{m_2}=1$\\
    \hline
    $\hat{Y}_{m_1}=0$ & $p_{00}^{a_1}$ & $p_{01}^{a_1}$\\
    $\hat{Y}_{m_1}=1$ & $p_{10}^{a_1}$ & $p_{11}^{a_1}$ \\
    \bottomrule
  \end{tabular}
  \end{minipage}
  \begin{minipage}{0.49\textwidth}
    \centering
\begin{tabular}{r|cc}
  \toprule
    $A = a_2$ & $\hat{Y}_{m_2}=0$ & $\hat{Y}_{m_2}=1$\\
    \hline
    $\hat{Y}_{m_1}=0$ & $p_{00}^{a_2}$ & $p_{01}^{a_2}$\\
    $\hat{Y}_{m_1}=1$ & $p_{10}^{a_2}$ & $p_{11}^{a_2}$\\
    \bottomrule
  \end{tabular}
  \end{minipage}
  \caption{Conditional on the sensitive attribute $A\in\{a_1,a_2\}$, the observed  $(\hat{Y}_{m_1},\hat{Y}_{m_2})$ can be modeled as a multinomial 
  with parameters $p_{ij}^a$.  For the case of FPR disparities discussed in Section \ref{sec:model}, all quantities in this table should be interpreted as being further conditioned on $Y = 0$. }
  \label{tab:multinom}
\end{table*}

\subsection{Partitioning scheme}
\label{sec:partition}

Given a set of partitioning covariates $X_1, \ldots, X_p$---which need not
correspond in any way to the inputs used by either classifier---we recursively
partition the covariate space using tests of the homogeneity of $\Delta$.  Our
partitioning procedure follows the approach of \cite{rpartykit, rmob} for
model-based recursive binary partitioning, and relies on a modified version of the
corresponding software.  Our implementation uses custom fitting functions supplied to  the \texttt{R} package \texttt{partykit}.   We briefly describe the procedure for the simple
case where all of the splitting variables are categorical.  The approach and software both fully extend to also handle numeric and ordinal variables.

Let $K_j$ denote the number of distinct levels of variable $X_j$, and let $\Delta^j_k$ denote the (population) value of $\Delta$ in level $k$ of variable $X_j$.   Beginning with all observations in the root node,
recursively split according to the following procedure:

\begin{enumerate}

  \item For each partitioning variable $j = 1, \ldots, p$, apply a score-type test (see Section \ref{sec:model} and Appendix \ref{appendix:likelihood}) for detecting when $\Delta^j_k$ varies across the levels $k \in \{1, \ldots, K_j\}$.  \\
  Select the splitting variable with the most
    significant difference in $\Delta$ (the smallest $p$-value for this test).  

  \item For the selected variable, partition its levels into the two groups
    which minimize the total deviance of the resulting model.

\end{enumerate}

The recursion terminates when nodes cannot be further split without falling below a user-specified minimum size
threshold, or no further splits can be identified for which the Bonferroni-adjusted $p$-value is smaller than a user-specified significance threshold.  As a final step,
the tree is pruned to eliminate splits where the differences in $\Delta$ are not of practical significance, but which were statistically significant due to large sample sizes. 

This partitioning scheme produces what we will refer to as a \emph{parameter instability tree}, with splits defined based on individual
covariates, similar to the familiar trees produced by CART
\citep{breiman1984classification} in classification settings.  The leaf nodes
of the tree correspond to subgroups where $\Delta$ appears homogeneous.
Figure \ref{fig:compas_tree} shows an example of the parameter instability tree for the COMPAS data, with $\Delta$ taken to be the difference in racial FPR disparity between the \texttt{Priors} and \texttt{COMPAS} model.  Section \ref{sec:compas_vs_priors} provides more details on the experimental setup.



%

\subsection{Modeling classifications}
\label{sec:model}

To carry out the recursive partitioning of Section \ref{sec:partition}, we need
a model for the classifications which is (a) reasonable, and (b) easily
captures $\Delta$ in its parametrization.  Given a population, a natural
joint model of classifications
$(\hat{Y}_{m_1}, \hat{Y}_{m_2})$ is as a multinomial conditional on sensitive
attribute $A \in \{a_1,a_2\}$, as illustrated in Table \ref{tab:multinom}. This multinomial is parameterized by the probabilities
\begin{align*}
  p_{ij}^a &= \P(\hat{Y}_{m_1}=i, \hat{Y}_{m_2}=j| Y=0, A=a),
\end{align*}
where $a\in\{a_1,a_2\}$.  The conditional multinomial is a convenient
formulation, since all relevant FPR quantities can be represented in terms of
these parameters:
\begin{align*}
  \mathrm{FPR}_{m_2}^{a_2} &= p_{01}^{a_2} + p_{11}^{a_2}\qquad
  \mathrm{FPR}_{m_2}^{a_1} = p_{01}^{a_1} + p_{11}^{a_1}\\
  \mathrm{FPR}_{m_1}^{a_2} &= p_{10}^{a_2} + p_{11}^{a_2}\qquad
  \mathrm{FPR}_{m_1}^{a_1} = p_{10}^{a_1} + p_{11}^{a_1}
\end{align*}

An important observation is that, for the purpose of computing the quantity of interest $\Delta$, it suffices to consider the coarser conditional multinomial over the three events
$\{\hat{Y}_{m_1}=0,\hat{Y}_{m_2}=1\}, \{\hat{Y}_{m_1}=1,\hat{Y}_{m_2}=0\},
\{\hat{Y}_{m_1}=\hat{Y}_{m_2}\}$. This reduced multinomial is parameterized by
$p_{10}^{a_1}, p_{01}^{a_1}, p_{10}^{a_2}, p_{01}^{a_2}$.  We summarize this observation in the proposition below.  

\begin{proposition}
  The FPR difference-in-difference, $\Delta$, can be written as
 \begin{align*}
  \Delta &= 
  p_{01}^{a_2} - p_{01}^{a_1} - p_{10}^{a_2} +p_{10}^{a_1} 
\end{align*}
\end{proposition}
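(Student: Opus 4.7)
The plan is to prove this by direct substitution of the FPR expressions from the multinomial parameterization into the definition of $\Delta$ given in equation \eqref{eq:delta}, and then to collect terms.

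First I would write $\Delta$ as it is defined:
\begin{align*}
\Delta = \left(\mathrm{FPR}_{m_2}^{a_2} - \mathrm{FPR}_{m_2}^{a_1}\right) - \left(\mathrm{FPR}_{m_1}^{a_2} - \mathrm{FPR}_{m_1}^{a_1}\right).
\end{align*}
Next I would substitute in the four identities already established just above the proposition, namely $\mathrm{FPR}_{m_2}^{a} = p_{01}^{a} + p_{11}^{a}$ and $\mathrm{FPR}_{m_1}^{a} = p_{10}^{a} + p_{11}^{a}$ for $a\in\{a_1,a_2\}$. Each of these follows immediately from the multinomial model of Table \ref{tab:multinom} by marginalizing out the classification of the other model while conditioning on $Y=0$. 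Once substituted, I would group the eight resulting terms by the superscript $a_1$ or $a_2$.

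The key observation driving the simplification is that within each group $A=a$, the probability $p_{11}^{a}$ (the event that both classifiers predict the positive class) contributes to both $\mathrm{FPR}_{m_1}^a$ and $\mathrm{FPR}_{m_2}^a$, and therefore enters the inner difference $\mathrm{FPR}_{m_2}^{a} - \mathrm{FPR}_{m_1}^{a}$ with cancellation. Reorganizing the definition of $\Delta$ as $\left(\mathrm{FPR}_{m_2}^{a_2} - \mathrm{FPR}_{m_1}^{a_2}\right) - \left(\mathrm{FPR}_{m_2}^{a_1} - \mathrm{FPR}_{m_1}^{a_1}\right)$ makes the cancellation even more transparent: $p_{11}^{a}$ drops out of each inner difference, leaving $p_{01}^{a} - p_{10}^{a}$. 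The claimed identity $\Delta = p_{01}^{a_2} - p_{01}^{a_1} - p_{10}^{a_2} + p_{10}^{a_1}$ then follows immediately.

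There is no real obstacle here; the statement is a purely algebraic identity and the whole argument is a few lines of cancellation. The only thing worth emphasizing, which is the actual content of the proposition, is \emph{why} the cancellation of the $p_{11}^{a}$ terms matters: it shows that $\Delta$ depends only on the off-diagonal disagreement cells of Table \ref{tab:multinom}, so the full $2\times 2$ multinomial per group collapses to the coarser trinomial over $\{\hat Y_{m_1} \neq \hat Y_{m_2}, \hat Y_{m_1}=1 > \hat Y_{m_2}=0\}$ and its symmetric counterpart, justifying the reduced parameterization used in the testing procedure of Section \ref{sec:partition}.
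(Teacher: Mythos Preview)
Your argument is correct and mirrors the paper's own proof: both rely on the identity $\mathrm{FPR}_{m_2}^{a}-\mathrm{FPR}_{m_1}^{a}=p_{01}^{a}-p_{10}^{a}$, obtained by cancelling the shared $p_{11}^{a}$ term, and then apply this for $a\in\{a_1,a_2\}$ to the definition of $\Delta$. The paper states this more tersely, but the content is identical.
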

\begin{proof}
  The proposition follows directly from the definition of $\Delta$ and the
  identity,
\begin{align*}
  &\P(\hat{Y}_{m_2}=1|Y=0,A=a) - \P(\hat{Y}_{m_1}=1|Y=0,A=a)\\
  &= p_{01}^a - p_{10}^a.
\end{align*}
\vspace{-1em}
\end{proof}

To obtain the score-type test statistic for $\Delta$ required in Step (1) of the partitioning scheme described in Section \ref{sec:partition}, we further reparameterize the model according to the transformations:
\newcommand{\basepw}{\eta^+}
\newcommand{\basemw}{\eta^-}
\newcommand{\diffpb}{\delta}
\newcommand{\diffmb}{\Delta}
\begin{align*}
  \basepw &= p_{01}^{a_1} + p_{10}^{a_1}\qquad  \diffpb = p_{01}^{a_2} + p_{10}^{a_2} - \basepw\\
  \basemw &= p_{01}^{a_1} - p_{10}^{a_1}\qquad  \Delta = p_{01}^{a_2} - p_{10}^{a_2} - \basemw .
\end{align*}
In this parameterization, $\basepw, \diffpb$ and $\basemw$ are treated as nuisance parameters in the model likelihood for the purpose of forming the score-type test statistic for $\Delta$.  A more complete derivation of the test statistic can be found in Appendix \ref{appendix:likelihood}.

 \subsection{Extensions}\label{sec:extensions}

 The methodology in Section \ref{sec:method} focuses on identifying subgroups
 with where two models differ in terms their FPR disparities.  To target disparities in False Negative Rates (FNR), the same procedure can be
 carried out by conditioning on $Y=1$ instead of $Y=0$, and exchanging the role of
 $\hat{Y}_m=1$ and $\hat{Y}_m=0$ in expression \eqref{eq:delta}.   $\Delta$ would then correspond precisely to the difference in FNR disparity. Similarly, to target disparities in the acceptance rates $\P(\hat{Y}_m = 1| A=a)$,
 the procedure can be carried out without conditioning
 on $Y$.   Certain other metrics may similarly be considered.  

 In principle, the methodology can also be extended to sensitive attributes $A$ that have more than two levels.  One would first need to define a quantity $\Delta$ that reflects the disparity of model predictions with respect to $A$.  For instance, in the case of acceptance rates, $\Delta$ could be taken to be the variance in acceptance rates across race (now understood to be non-binary).  An extension of the proposed procedure to this quantity would thus identify subgroups where one model exhibits greater variability in acceptance rates across race compared to another model.   Alternatively, the proposed approach can be applied directly in an all-pairs or one-versus-all manner.
 
Lastly, we note that the score-type test for testing the null hypothesis in Step (1) of Section \ref{sec:partition} can be replaced with any other valid statistical test.  One could thus use a test that has greater power against particular types of alternatives.  Note, however, that the score test is a computationally efficient choice.   This is because, unlike most tests, the score test only requires that maximum likelihood parameters be computed under the null.  This obviates the need for model refitting under the alternative for each splitting variable.

%

\begin{figure}[t]
\includegraphics[width = 0.51\textwidth]{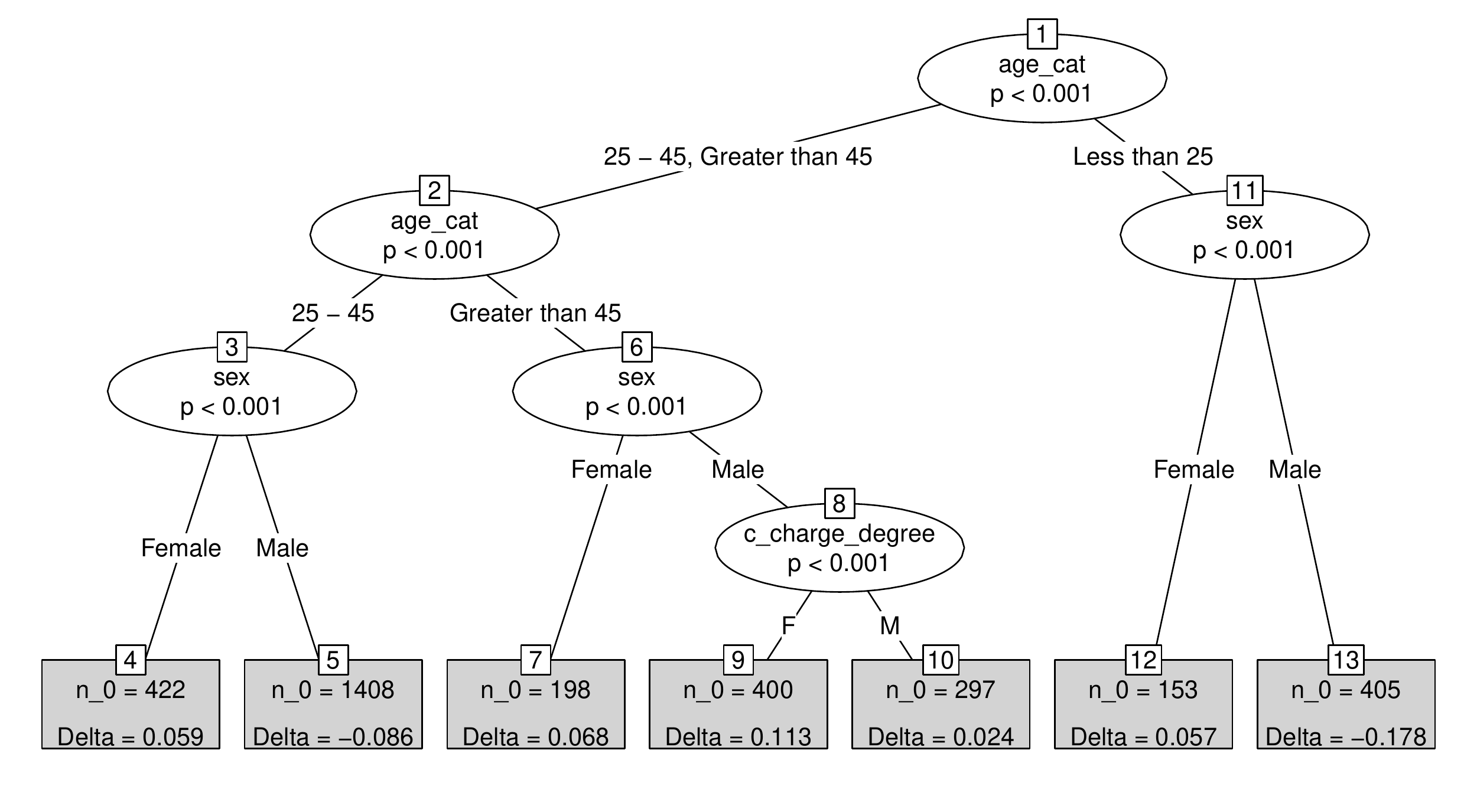}
\caption{Parameter instability tree for identifying differences in racial FPR disparities between \texttt{COMPAS} and \texttt{Priors}.  Sample sizes $n_0$ indicate number of observations in the terminal node for which $Y = 0$.  Negative values of $\Delta$ correspond to subgroups where the FPR disparity in favor of White defendants is smaller for \texttt{Priors} than for \texttt{COMPAS}.}
\label{fig:compas_tree}
\end{figure}

\begin{figure}[t]
\includegraphics[width = 0.51\textwidth]{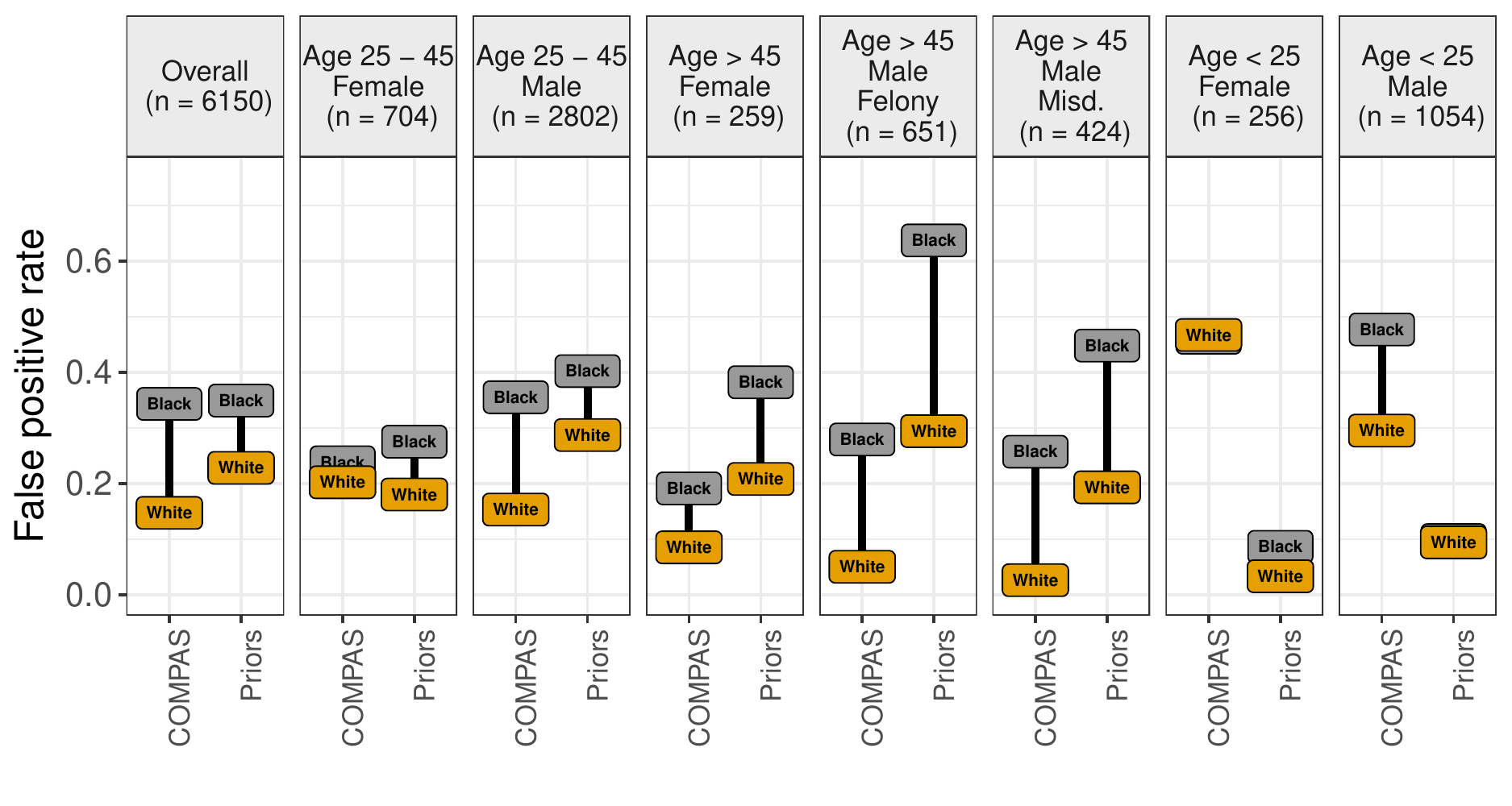}
\caption{Differences in racial FPR disparities between \texttt{COMPAS} and \texttt{Priors}.}
\label{fig:compas_fpr}
\end{figure}

\section{Evaluation}


\subsection{Recidivism risk prediction} \label{sec:compas_vs_priors}


We begin by revisiting our motivating example with ProPublica's COMPAS data from Broward County, Florida. So far we have seen that the \texttt{COMPAS} score performs similarly to the priors count, \texttt{Priors}, in terms of overall classification metrics.  To delve deeper into differences between these two recidivism prediction models, we apply our method to identify subgroups where \texttt{COMPAS} and \texttt{Priors} differ in terms of the \emph{disparity in false positive rates between Black and White defendants}.  The candidate splitting variables are taken to be \verb|sex|, \verb|age_cat|, \verb|c_charge_degree|, \verb|juv_misd_count|, \\ \verb|juv_fel_count|, and \verb|juv_other_count|.  Figure~\ref{fig:compas_tree} shows the resulting parameter instability tree, and Figure~\ref{fig:compas_fpr} provides a more easily interpretable representation of the findings.  

Our method identifies $7$ subgroups defined in terms of \verb|sex|, \verb|age_cat| and \verb|c_charge_degree| splits where the extent or nature of the disparity in FPR between Black and White defendants is different between the two models.  For instance, as we can clearly see in rightmost panel of Figure~\ref{fig:compas_fpr}, the racial FPR disparity among young men is large for \texttt{COMPAS} but is nearly $0$ for \texttt{Priors}.  

We emphasize two key points.  First, we observe that the Overall difference in racial FPR disparity is not reflective of differences at the subgroup level.  Furthermore, we note that while the differences in FPR across the $7$ subgroups are at least in part due to differences in recidivism prevalence across the subgroups, the same argument does not explain the differences between \texttt{COMPAS} and \texttt{Priors} \emph{within} the subgroups. 

\subsection{Sensitive attributes as inputs} \label{sec:adult}

\begin{figure}[t]
\includegraphics[width = 0.51\textwidth]{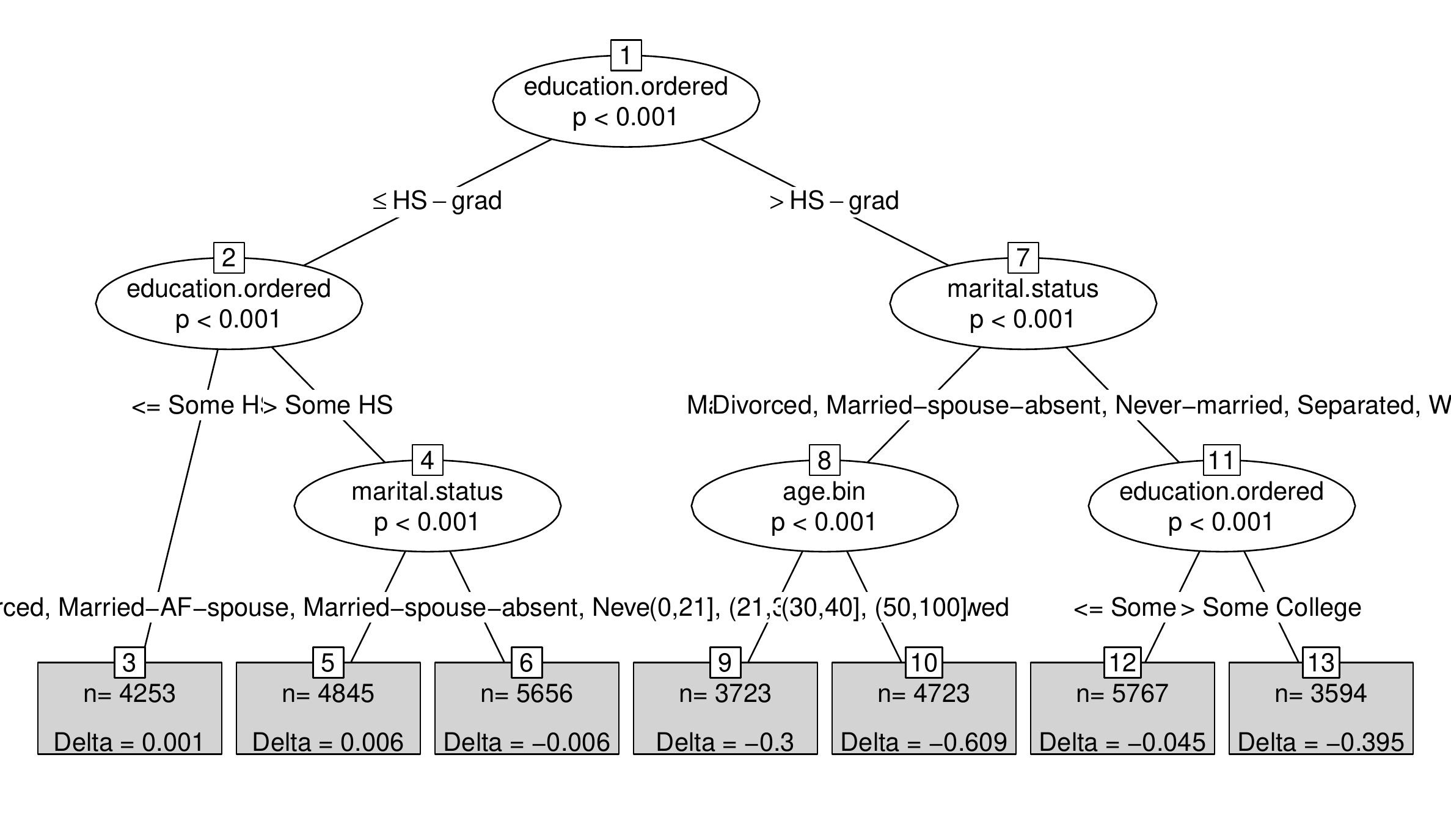}
\caption{Parameter instability tree for identifying differences in acceptance (``lending'') rates between men and women between the \texttt{Small} and \texttt{Full} models.  Negative values of $\Delta$ correspond to subgroups where the acceptance rate disparity in favor of Male applicants is smaller for \texttt{Full} than for \texttt{Small}. Nodes \{3, 5, 6\} are later collapsed in the pruning stage as the corresponding $\Delta$'s of $0.001, 0.006$ and $-0.006$, while statistically significant due to large sample sizes, are not of practical significance.}
\label{fig:adult_tree}
\end{figure}

For our next example we use the \texttt{Adult} data set from the UCI database \cite{Lichman:2013} to frame a hypothetical lending problem.  We fit two random forest models to the data to predict whether individuals are in the \verb|>50K| income (``loan-worthy'') category.  The \texttt{Small} model uses \verb|sex, age, workclass, education.years| as inputs, while the \texttt{Full} model \emph{additionally} uses \verb|race| and \verb|marital.status|, both of which are typically considered to be sensitive attributes.  While we do not claim that either model is realistic, this setup does illustrate an interesting phenomenon.  

We apply our method to identify subgroups where the \emph{disparity in lending rates between Male and Female applicants differs between the \texttt{Small} and \texttt{Full} model}.  More precisely, $\Delta$ in this example is taken to be:
\begin{align*}
  \Delta &= \left(\P(\hat Y_\texttt{Full} = 1 \mid \texttt{Male}) - \P(\hat Y_\texttt{Full} = 1 \mid \texttt{Female})\right) - \\
  &\left(\P(\hat Y_\texttt{Small} = 1 \mid \texttt{Male}) - \P(\hat Y_\texttt{Small} = 1 \mid \texttt{Female})\right).
\end{align*}
  The candidate splitting variables are taken to be \verb|education|, \verb|age|, \verb|marital.status| and \verb|race|.    Figure~\ref{fig:adult_tree} shows the resulting parameter instability tree, and Figure~\ref{fig:adult_accept} provides a more interpretable representation of the results.  Unlike in the COMPAS example, the number of terminal nodes presented in the tree differs from the number of subgroups presented in the Figure~\ref{fig:adult_accept} summary.  This is because the tree is shown \emph{prior} to pruning, a final step that collapses nodes 3,5 and 6 into a single \verb|{Education <= High School}| subgroup.

We observe that overall acceptance (lending) rates go up for both men and women when marital status and race is included in the model.  We also find that the gender disparity in lending rates \emph{decreases}---and even \emph{inverts}---among Married individuals who have more than a High School education.  The disparity also decreases considerably among unmarried individuals with at least a College education.  However, this is largely due to the massive drop in lending rates among Men in this subgroup.




{
\begin{figure}[t]
\includegraphics[width = 0.51\textwidth]{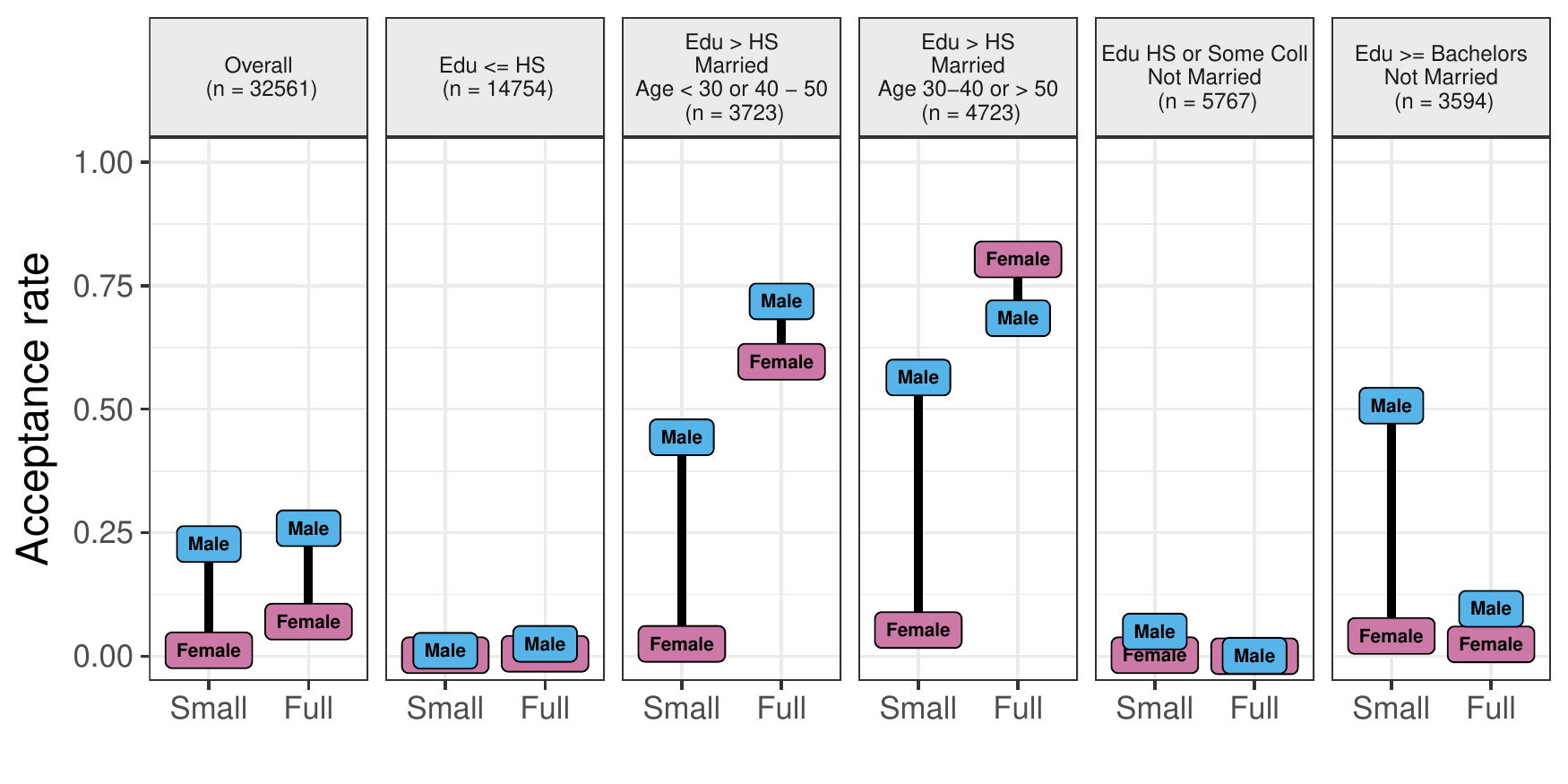}
\caption{Differences in acceptance (``lending'') rates between men and women when \texttt{marital.status} and \texttt{race} are \emph{not} included as inputs (\texttt{Small}) compared to when they are (\texttt{Full}).}
\label{fig:adult_accept}
\end{figure}
}

\section{Conclusion}

This paper introduced a test-based recursive binary partitioning approach to identifying subgroups where two models differ considerably in terms of their fairness properties.  Using examples in recidivism prediction and lending, we showed how this approach can be used to detect large subgroup differences in fairness that are not apparent from an overall performance comparison.  The methodology can be further extended to target other kinds of disparity parameters and to use other statistical tests for parameter instability.  

\section{Acknowledgements}  
We thank the anonymous FAT/ML referees for their helpful comments on the initial version of this manuscript.


%
%


\bibliographystyle{ACM-Reference-Format}
\bibliography{modelcomp} 


\appendix

\section{Score test derivation} \label{appendix:likelihood}

In this section we present the derivation of the likelihood and score-type test used in the partitioning scheme described in Section \ref{sec:partition}.  This derivation is carried out for the FPR difference-in-differences parameter as defined in expression \eqref{eq:delta}.  Tests for other parameters of interest may be derived analogously.

\subsection*{Parametrization in terms of $\Delta$}

For identifying partitions of covariate space on which $\Delta$ is homogeneous, it is helpful to write the
distribution and its likelihood directly in terms of $\Delta$.  We use the
following reparameterization of the multinomial
\begin{align*}
  \basepw &= p_{01}^{a_1} + p_{10}^{a_1}  &\diffpb = p_{01}^{a_2} + p_{10}^{a_2} - \basepw\\
  \basemw &= p_{01}^{a_1} - p_{10}^{a_1}  &\diffmb = p_{01}^{a_2} - p_{10}^{a_2} - \basemw .
\end{align*}
The parameterization $(\basepw, \basemw, \diffpb, \diffmb)$ is equivalent to \\
$(p_{01}^{a_1}, p_{10}^{a_1}, p_{01}^{a_1}, p_{10}^{a_1})$, since
\begin{align*}
  p_{01}^{a_1} &= \left(\basepw + \basemw\right)/2, \quad
  p_{01}^{a_2} = \left(\basepw + \diffpb + \basemw + \diffmb\right)/2\\
  p_{10}^{a_1} &= \left(\basepw - \basemw\right)/2, \quad
  p_{10}^{a_2} = \left(\basepw + \diffpb - \basemw - \diffmb\right)/2.
\end{align*}

\vspace{2em}

\subsection*{Likelihood and score}

The log-likelihood for this multinomial model of a \emph{sample} of size $n$
with parameter $\theta = (p_{00}, p_{01}, p_{10}, p_{11})$ is 
 \begin{align*}
   \ell(\theta)   &= \sum_{a\in\{a_1,a_2\}} \left(n_{01}^a \log p_{01}^a +
   n_{10}^a\log p_{10}^a + n_\bullet^a\log p_\bullet^a\right),
 \end{align*}
 where $n_{ij}^a$ is the number of observations in group $A=a$
 classified to $\hat{Y}_{m_1} = i$ and $\hat{Y}_{m_2}=j$.  Additionally, 
 $n_\bullet^a = n_{00}^a + n_{11}^a$, and $p_\bullet^a = 1-p_{10}^a - p_{01}^a$.

 For constructing the test of homogeneity, it is useful to have an expression
 for the score function for a \emph{single observation} with respect to the parameters
$(\basepw, \basemw, \diffpb, \Delta)$.  This is given by:
\begin{align*}
  \dot\ell_i(\theta) &= \left(
   \frac{\partial\ell_i(\theta)}{\partial \basepw},
   \frac{\partial\ell_i(\theta)}{\partial \basemw},
   \frac{\partial\ell_i(\theta)}{\partial \diffpb},
   \frac{\partial \ell_i(\theta)}{\partial \Delta} 
   \right)^T ,
 \end{align*}
 where
 \begin{align*}
   \frac{\partial\ell_i(\theta)}{\partial \basepw} &=
   \frac{\Ind_{01}^{a_1}}{2p_{01}^{a_1}} + 
   \frac{\Ind_{10}^{a_1}}{2p_{10}^{a_1}} + 
   \frac{\Ind_{01}^{a_2}}{2p_{01}^{a_2}} + 
   \frac{\Ind_{10}^{a_2}}{2p_{10}^{a_2}} - 
   \frac{\Ind_\bullet^{a_1}}{p_\bullet^{a_1}}-
   \frac{\Ind_\bullet^{a_2}}{p_\bullet^{a_2}}\\
 \frac{\partial\ell_i(\theta)}{\partial \basemw} &=
   \frac{\Ind_{01}^{a_1}}{2p_{01}^{a_1}} - 
   \frac{\Ind_{10}^{a_1}}{2p_{10}^{a_1}} + 
   \frac{\Ind_{01}^{a_2}}{2p_{01}^{a_2}} - 
   \frac{\Ind_{10}^{a_2}}{2p_{10}^{a_2}} \\ 
   \frac{\partial\ell_i(\theta)}{\partial \diffpb} &=
   \frac{\Ind_{01}^{a_2}}{2p_{01}^{a_2}} + 
   \frac{\Ind_{10}^{a_2}}{2p_{10}^{a_2}} - 
   \frac{\Ind_\bullet^{a_2}}{p_\bullet^{a_2}}\\
   \frac{\partial \ell_i(\theta)}{\partial \Delta} &= 
   \frac{\Ind_{01}^{a_2}}{2p_{01}^{a_2}} -
   \frac{\Ind_{10}^{a_2}}{2p_{10}^{a_2}},
 \end{align*}
 with $\Ind_{ij}^a = \begin{cases} 1 & \text{if } \hat{Y}_{m_1}=i,
   \hat{Y}_{m_2}=j, \text{ and } A=a\\
 0 & \text{otherwise}\end{cases}$ , \\
 and $\Ind_{\bullet}^a =
 \Ind_{00}^a+\Ind_{11}^a$.
 In this notation, the score for the full sample is $\dot\ell(\theta) = \sum_{i=1}^n
 \dot\ell_i(\theta)$.

\newpage
\subsection*{Test statistic}

To test for homogeneity, we use the score-based Lagrange Multiplier test
statistic of \cite{merkle2013tests}.  Consider a categorical splitting variable
$X \in \R^n$ with levels in $\{1,\dots,K\}$.  Let $\hat \theta$ denote the MLE of $\theta$ under the null hypothesis that $\theta$ is constant across every level of $X$.  While the null is equality of the entire parameter vector $\theta$, our test statistic is constructed to have power to detect differences in $\Delta$.  The test statistic is constructed as follows:
\begin{align*}
  T &= \sum_{k=1}^K \frac{1}{\#\{X_i = k\}} \left(\sum_{i: X_i=k}
  \dot\ell_i(\hat\theta)^T\hat{I}(\hat\theta)^{-1/2}e_4\right)^2 ,
\end{align*}
where $\hat{I}(\theta) = \frac{1}{n}\sum_{i=1}^n
\dot\ell_i(\hat\theta)\dot\ell_i(\hat\theta)^T$ and $e_4 = (0,0,0,1)^T$ ($\Delta$
corresponds to the fourth parameter).  Under the null hypothesis that $\theta$ is constant across every level of $X$, $T$ asymptotically follows the $\chi_{K-1}^2$
distribution \cite{hjort2002tests}.
\end{document}